\documentclass[hidelinks]{article}
\usepackage{amsmath,amssymb,amsthm,bbm,theoremref}
\usepackage{booktabs, csquotes, graphicx, subfig, mathtools}
\usepackage{bookmark, caption, float}

\def\reals{\mathbbm{R}}

\def\A{{\cal A}}
\def\C{{\cal C}}
\def\F{{\cal F}}

\def\M{{\cal M}}
\def\V{{\cal V}}

\def\reals{\mathbbm{R}}
\def\ereals{\overline{\reals}}
\def\minimize{\mathop{\rm minimize}\limits}
\def\ovr{\mathop{\rm over}\ }
\def\st{\mathop{\rm subject\ to}}

\def\midd{\,|\,}

\newtheorem{theorem}{Theorem}

\newtheorem{proposition}[theorem]{Proposition}

\newtheorem{remark}[theorem]{Remark}

\title{Optimal Pricing and Hedging of SOFR Derivatives\thanks{The authors would like to thank Damiano Brigo, Thorsten Schmidt and Wahid Khosrawi for helpful comments and suggestions during the project.}}
\author{Teemu Pennanen\thanks{Department of Mathematics, King's College London, Strand, London, WC2R 2LS, United Kingdom, teemu.pennanen@kcl.ac.uk} \and Waleed Taoum\thanks{Department of Mathematics, King's College London, Strand, London, WC2R 2LS, United Kingdom, waleed.taoum@kcl.ac.uk}}
\date{\today}

\begin{document}

\pdfbookmark{Table of Contents}{name} 

\maketitle

\begin{abstract}
Thousands of SOFR derivatives are available in exchanges and OTC, but the market remains illiquid and incomplete. Such a market is beyond the scope of classic risk-neutral approaches that imply linear pricing rules and, at best, approximate hedging strategies whose hedging error may be difficult to quantify. This paper develops an indifference pricing model which is consistent with observed derivative quotes, the agent's financial position and views about the uncertain future as well as risk preferences as described by a convex risk measure. In addition to prices and hedging strategies, the model gives an explicit description of the hedging error and the associated risk. The approach is illustrated numerically using hundreds of CME-listed derivatives to price and hedge unreplicable OTC SOFR derivatives. The indifference prices are computed in less than a minute on a regular PC. We find that the optimal hedging portfolios tend to be sparse but still provide good approximations of the derivative payouts.
\end{abstract}

\section{Introduction}\label{sec:introips}

In a complete financial market, the payouts of all contingent claims can be replicated by the proceeds of some trading strategy. In such a market, the unique arbitrage-free price of a claim is given by the replication cost, as beautifully illustrated in~\cite{bs73} in a continuous-time market model driven by a Brownian motion. Soon after the publication of~\cite{bs73}, it was found that replication-based prices of contingent claims could be expressed as expectations of their cashflows under so-called ``risk-neutral measures"; see~\cite{hk79}. This was an elegant application of convex analysis, and it soon became the standard way to view derivative pricing. The result, often called the ``fundamental theorem of asset pricing" has, however, had the unfortunate effect of shifting attention from the underlying economics (the hedging problem) to more mathematical issues in stochastics (the study of martingales). This has turned out to be particularly cumbersome in incomplete markets where replication is usually impossible. Indeed, risk-neutral pricing says little about how to best hedge unreplicable instruments. Moreover, it results in a linear pricing rule which is at odds with the nonlinear illiquidity effects observed in practice. 

Even when exact replication is not possible, it still makes sense to look at trading and hedging costs when pricing financial products. In incomplete markets, one just has to reconsider what kind of hedges and prices would be acceptable for a seller or a buyer of a product whose cashflows cannot be perfectly replicated. This paper studies the {\em indifference pricing} principle in the context of SOFR derivatives markets where incompleteness arises both from illiquidity effects as well as random jumps in the rates caused by monetary policy decisions.

In indifference pricing, one treats investment decisions, risk management, derivative pricing and hedging a single unifying framework. The prices are consistent with observed market prices of available financial instruments as well as the agent's existing financial position, risk preferences and views concerning the uncertain future. The approach constructs optimal hedging strategies along with a quantification of the hedging error that may be significant in incomplete markets. If an instrument happens to be replicable, its indifference price coincides with its replication cost in accordance with classical replication arguments of~\cite{bs73}. Further connections with risk-neutral valuations can be seen in the {\em dual} of the optimal investment problem. Indeed, the variables in the dual problem turn out to be martingale measures or, more generally, ``consistent price systems/stochastic discount factors" calibrated to the observed market prices of the available hedging instruments; see~\cite[Example~3.75 and Remark~3.76]{pp24}. Indifference pricing is quite a general approach, and it can be applied in arbitrary market sectors and market models. References on indifference pricing include~\cite{car9,hodges1989optimal,brigo17,pen14,armstrong2018pricing,mz4,pen26}. The indifference pricing principle can be found already in~\cite[Section~4.1.2]{buh70} under the name of ``principle of zero utility". To our knowledge, the present paper is the first one to study indifference pricing in SOFR derivatives markets. 

Various risk-neutral pricing models for SOFR derivatives have been proposed in the recent years. Most of the models can be seen as modifications of classical interest rate models found e.g.~in~\cite{brigo2006interest,fil9,mr5}. The most common approach seems to be to fix one martingale measure and to use that for pricing of all derivatives; see e.g.~\cite{mercurio2018simple,lyashenko2019libor,andersen2020spike, macrina20,skov2021dynamic,xu21,FON2023,ls23,russo23,rf23,Turfus2023,fusai24,IK2024,lS24,schlogl2024term,yueh24,calvia2026short,shai2026}. Such an approach is often computationally convenient, but it doesn't say much about how to best hedge a general interest rate derivative. Certain simple interest rate products such as floating rate loans, swaps, and treasury bonds can be replicated by portfolios of SOFR futures; see e.g.~\cite{fabozzi24,huggins22,CMErep}. Lyashenko and Mercurio~\cite{lyashenko2020libor} as well as Bickersteth et al.~\cite{BDR2025} impose assumptions that make their market models complete, thus avoiding the problems that come with incompleteness. On the other hand, their models do not allow for jumps in the rates which are often observed at FOMC meeting dates. Fontana et al.~\cite{fontana2024term} allow for stochastic discontinuities and construct approximate hedging strategies for nonreplicable derivatives by quadratic hedging approach that is surveyed in~\cite{MS2001}. Unlike indifference pricing with a convex risk measure, however, quadratic hedging does not distinguish between profits and losses.

Indifference pricing can be built on any model of optimal trading. This paper develops a semi-static trading model where one optimizes a static portfolio in exchange-traded derivatives while all payouts are carried forward in the overnight money market. More specifically, we consider SOFR derivatives listed by CME. CME lists 13,500 SOFR derivatives, all of which come with bid-ask spreads and finite quantities. In fact, almost half of them have zero liquidity at the time of writing. The indifference prices and optimal hedges are computed in less than a minute on a regular PC. The optimal hedging portfolios tend to be sparse but still provide good approximations of the derivative payouts. We also study the sensitivity of the prices with respect to the agent's risk preferences and the liquidity of the hedging instruments. Increasing the bid-ask spreads results in even sparser hedging portfolios, while the sensitivity with respect to the risk aversion seems to be strongly dependent on the derivative contract being priced. 

The remainder of the paper is organised as follows. Section~\ref{sec:sdm} provides a brief description of the CME-listed hedging instruments as well as the OTC derivatives studied in Section~\ref{sec:ipsd}. Section~\ref{sec:rf} gives a quick overview of the term structure model that will be used in the computations of this paper. Section~\ref{sec:PFO} describes the optimal investment problem that will be used to define the indifference prices later on. Section~\ref{sec:optimum} studies the optimum value of the optimization problem as a function of the agent's financial position. Section~\ref{sec:iph} defines the indifference prices and studies their basic properties. Section~\ref{sec:ipsd} presents a numerical study of applying indifference pricing to OTC SOFR derivatives.

\section{The SOFR Derivatives Market}\label{sec:sdm}

This section briefly reviews the derivative contracts studied in the rest of this paper. For further details, we refer the reader to~\cite[Section~2]{Pen25sofr} and its references.

\subsection{Exchange-Traded Derivatives}
\label{sec:exchange}

The CME Group lists SOFR futures, SOFR options, and SOFR swap futures for trading in double auction markets. SOFR futures with one- and three-month reference periods are the underlying of one- and three-month SOFR options, respectively. Both SOFR futures and options have monthly and quarterly maturities. In addition, the options have weekly maturities. 
At the time of writing, CME lists 13,500 contracts in total, but only about 7,300 of them have quotes. We will focus on the most liquid of the listed derivatives: the three-month SOFR futures and futures options, which expire in the March quarterly cycle. It would be straightforward to include more derivatives in the optimization and pricing models below. On 28 August 2024 at 15:30:00, CME listed three-month futures options had quotes for 8  maturities, each with up to 139 strikes. Additionally, there are quotes for 24 three-month futures maturities. 

The floating leg of a SOFR three-month futures contract pays a multiple of the {\em geometric SOFR average} defined by
\begin{equation}\label{eq:SOFR_Average}
R(t_0,t_1) := \frac{1}{(t_1-t_0)\delta} \left[ \prod_{t=t_0}^{t_1-1}\left(1 + r_t\delta\right) - 1\right],
\end{equation}
where $t_0$ and $t_1$ denote the start and end dates of the three-month reference period, $\delta:=1/360$, and $r_t$ is the {\em SOFR rate} applicable from calendar day $t$ to $t+1$; see~\cite[Section 2.1]{Pen25sofr}. In a futures contract, the counterparties agree to exchange a multiple of the SOFR average for a fixed payment at time $t_1$. 

The CME SOFR futures are quoted in terms of {\em futures prices}. 
The payout of a long position in a futures contract with futures price $f$ is given by
\begin{equation}\label{eq:3mfutpo}
100[1 -R(t_0,t_1)]-f
\end{equation}
multiplied by a notional of \$2,500; see~\cite[Section~46003.A]{CMESOFRFut3m}. 
Defining the {\em futures rate} $F=1-f/100$, the payout of a long futures position becomes
\begin{equation}\label{eq:futpor}
    F-R(t_0,t_1)
\end{equation}
multiplied by \$250,000. We will work in terms of futures rates for simplicity.

\begin{remark}[Marking-to-market]
CME SOFR futures are marked-to-market at the end of each trading day; see~\cite[Chapter~8]{CME2022Rulebook}. Denoting the daily futures rate by $F_t$, the net cashflows from marking-to-market a long futures position equal, in agreement with~\eqref{eq:futpor},
\[
\sum_{s=t+1}^{t_1}(F_{s-1} - F_s) = F_t-F_{t_1},
\]
where $F_{t_1} = R(t_0, t_1)$. If the margin account is paid a significant interest, the timing of the payments matters. In a market-neutral portfolio, however, one can expect the net daily payments to be small. The optimization model in Section~\ref{sec:PFO}, below, is aimed at constructing such portfolios, so we will ignore the margining in what follows.
\end{remark}

CME lists three-month futures contracts for 46 different maturities. At any given time, the best available prices for taking a position in a futures contract are given by a {\em limit order book}. In this paper, we will only consider the best price levels. The best {\em futures ask price} available at time $t$ for reference period $[t_0,t_1]$ will be denoted by $f^a_t(t_0,t_1)$. The best {\em futures bid price} will be denoted by $f^b_t(t_0,t_1)$. While entering a futures position is costless, the payout at time $t_1$ of a position of size $x$ is given by
\begin{equation}\label{eq:fp}
\begin{cases}
[F_t^a(t_0,t_1)-R(t_0,t_1)]x & \text{if $x\ge 0$},\\
[F_t^b(t_0,t_1)-R(t_0,t_1)]x & \text{if $x\le 0$},
\end{cases}
\end{equation}
where
\begin{align*}
F_t^a(t_0,t_1) &:= 1-f^a(t_0,t_1)/100,\\
F_t^b(t_0,t_1) &:= 1-f^b(t_0,t_1)/100.
\end{align*}
Since the ask price $f^a_t(t_0,t_1)$ is always strictly higher than the bid price $f^b_t(t_0,t_1)$, we have $F^a_t(t_0,t_1)<F^b_t(t_0,t_1)$ so the futures payout is {\em concave} as a function of the position~$x$.

A CME futures {\em call (put) option} gives its owner the option to enter a long (short) position in a futures contract with the fixed leg equal to the option {\em strike}. We will study options on three-month futures. The reference period of the futures begins on the Wednesday following option expiry; see~\cite{CMESOFROPT3m}. The CME futures options are American style, but the time of exercise does not affect the resulting futures position. Challenges with valuation and optimal exercise of American options on SOFR futures have been studied in~\cite{IK2024}.

\begin{remark}[Early exercise in long position] \label{rem:optionexerc}
Even though the CME futures options are American-style, they can be treated as European options. Indeed, since the time of exercise does not affect the resulting futures position, the option holder is better off waiting until maturity before deciding whether to exercise.
\end{remark}

Assuming that the option holder closes the futures position immediately after exercise at the available market rate, allows us to treat the options as cash-settled instruments. According to CME, closing the the futures position is indeed common in practice; see~\cite{CMEoptex}. If the holder of a long position in a futures call option with strike~$X$ decides to exercise the option at maturity~$t_M$ and immediately closes the position by taking the opposite futures position at the prevailing market futures rate $F_{t_M}(t_0,t_1)$, their net position would pay a multiple of~$X-F_{t_M}(t_0,t_1)$ at the futures maturity~$t_1$. Assuming the option holder is rational and only exercises if the payout is positive, the option can be treated as a cash-settled instrument with payout 
\begin{equation}\label{eq:callpo}
[X - F_{t_M}(t_0,t_1)]^+
\end{equation}
at time $t_1$. A similar argument implies that the payout of a futures put option is given by 
\begin{equation}\label{eq:putpo}
[F_{t_M}(t_0,t_1) - X]^+.
\end{equation}

\begin{remark}[Early exercise in short position]
Even though CME futures options are American style, it is safe for an option seller to assume that the option holder won't exercise early. Indeed, if the holder of a call option decides (against the logic explained in Remark~\ref{rem:optionexerc}) to exercise early, the seller can close their short futures position at the option maturity $t_M$ so that their net liability cashflow would become $X - F_{t_M}(t_0,t_1)$. This is dominated by ~\eqref{eq:callpo} which assumes European-style exercise. An analogous argument applies to futures put options.
\end{remark}

If the price at time $t_M$ of a zero-coupon bond with maturity $t_1$ is $P_{t_M}(t_1)$, we can treat the call option also as an instrument with payout 
\[
P_{t_M}(t_1)[X-F_{t_M}(t_0,t_1)]^+
\] 
at the option maturity $t_M$. Analogously, the put option payout at time $t_M$ would be given by 
\[
P_{t_M}(t_1)[F_{t_M}(t_0,t_1)-X]^+.
\]

\subsection{OTC Derivatives}\label{sec:otc}

The most liquid OTC derivatives are the SOFR overnight index swaps (OIS). An OIS is a swap contract that has annual payments at $T_k$ with $k=1, \ldots, K$. For maturities less than a year, the exchange of cash flows occurs at maturity. The fixed leg payments are equal to the swap rate $X$ multiplied by the day count fraction i.e.\ $X(T_k-T_{k-1})\delta$, while the floating leg is the geometric SOFR average multiplied by the day count fraction i.e.\ $R(T_{k-1},T_k)(T_k-T_{k-1})\delta$; see~\cite{CMEotcswaps}. The fixed leg can be replicated by a collection of zero-coupon bonds while the floating leg can be replicated by rolling money in the overnight market. Assuming that the ZCB for relevant maturities can be traded at prices $P_t(T_k)$, the value at time $t\le T_0$ of a short position in the OIS would be
\begin{equation}\label{eq:ois}
X\sum_{k=1}^K P_t(T_k) \delta(T_k-T_{k-1}) - [P_t(T_0)-P_t(T_K)];
\end{equation}
see e.g.~\cite[Section~2.2]{Pen25sofr}.

Swaptions are call and put options on OIS. If exercised, a call swaption, with strike $X$ and maturity $T_0$, converts to a short position in an OIS with a swap rate $X$; see~\cite{CMEIRprod, CMEswaption}. Assuming perfectly liquid zero-coupon bonds, the market price at the time $T_0$ of such an OIS is given by~\eqref{eq:ois} with $t=T_0$. If the price is negative, a rational agent does not exercise, so the swaption value at maturity $T_0$ becomes
\begin{equation}\label{eq:swaptionpo4}
\left(X \sum_{k=1}^{K} P_{T_0}(T_k) \delta (T_{k} - T_{k-1}) - P_{T_0}(T_0) + P_{T_0}(T_K)\right)^+,
\end{equation}
where $P_{T_0}(T_0)=1$. 

A {\em caplet} with a reference period $[t_0,t_1]$ is a call option on the SOFR average $R(t_0,t_1)$ defined in~\eqref{eq:SOFR_Average}. Thus, the payout of a caplet with reference period $[t_0,t_1]$ and strike $X$ is given by
\begin{equation}\label{eq:capletpo}
[R(t_0,t_1) - X]^+
\end{equation}
at time $t_1$. Similarly, a {\em floorlet} payout is given by 
\begin{equation}\label{eq:floorletpo}
[X-R(t_0,t_1)]^+.
\end{equation}
{\em Caps} and {\em floors} are sequences of caplets and floorlets. 
Other OTC products whose payouts are functions of the future path of the term structure include constant maturity swaps, CMS spread options, barrier and digital options, but they won't be studied in this paper.

\section{A Stochastic Model of the Risk Factors}\label{sec:rf}

To describe the future payouts of the derivative contracts discussed in the previous section, requires a model for the prevailing SOFR rates, futures rates and the prices of zero-coupon bonds. We will model them with a stochastic {\em term structure model} that, at any given time $t$, specifies the zero-coupon prices $P_t(T)$ for maturities $T>t$. The model assumes that the zero-coupon bonds exist for each daily maturity and that they are liquid in the sense that their buying and selling prices are equal. Under these assumptions, the unique arbitrage-free price at time $t$ of a long position in a futures contract with futures rate $F$ would be
\begin{equation}\label{eq:futures}
\frac{P_t(t_0)-P_t(t_1)}{(t_1-t_0)\delta} - FP_t(t_1);
\end{equation}
see e.g.~\cite[Section~2]{Pen25sofr}. Thus, the unique futures ``market rate" that makes the value in~\eqref{eq:futures} equal to zero is given by
\begin{equation}\label{eq:futzcb}
F_t(t_0,t_1) = \frac{P_t(t_0)-P_t(t_1)}{P_t(t_1)(t_1-t_0)\delta}.
\end{equation}

The evolution of the term structure will be described by the stochastic model developed in~\cite{Pen25sofr}. The model describes the daily evolution of the SOFR and the overnight forward rates. The model allows for random jumps on the monetary policy committee meeting dates consistent with the long-term evolution of FED policy rates together with price inflation and GDP. Much like in the Hull-White model~\cite{hull1990pricing}, the median of the overnight rate is allowed to be time-dependent. In~\cite{Pen25sofr}, the median path is taken to be the forward curve calibrated to the observed futures quotes. Importantly, both the macroeconomic factors and the term structure evolution are described under the {\em subjective} probability measure, also known as the ``physical/real-world/historical" or the ``$P$-measure". Instead of the model of~\cite{Pen25sofr}, one could use any subjective term structure model in the optimization and pricing models below.

As a quick illustration of the term structure model, we simulate $2^{16}=65,536$ scenarios of the SOFR forward curve over a two-year period starting on 28 August 2024. The top plot in Figure~\ref{fig:allstrikes} describes the development of the median and the 90\% and 99\% confidence bands of the three-month SOFR average $R(t_0,t_1)$. Since the term structure model is calibrated to the forward curve estimated from the observed three-month futures quotes (see~\cite[Section 4.5]{Pen25sofr}), the median tracks the futures quotes fairly closely. The small vertical lines crossing the median are the futures bid-ask rate intervals. The futures quotes were downloaded from Bloomberg on 28 August 2024 at 15:30:00.

The lower plot in Figure~\ref{fig:allstrikes} describes the development of the median and the $90\%$ and 99\% confidence bands of the model-based quarterly three-month futures rates at option maturity computed from the simulated forward curve using equation~\eqref{eq:futzcb}. The plot also illustrates the strikes of all the listed three-month futures options on 28 August 2024 at 15:30:00. In total, there were 1291 listed options within the two-year maturity. Many of the listed strikes go deeply in the negative territory because of the current CME listing rules. The deep in/out-of-the-money options have low liquidity. In particular, the deep in-the-money options often come with unit bid and ask sizes, which, according to CME, corresponds to market maker offers known as ``cabinets''. On the other hand, many out-of-the-money quotes have zero bid price. In the optimization and pricing models below, we will only use options having strikes within the $90\%$ confidence band of the simulated three-month futures rates. This gives 619 options in total.

\begin{figure}[!ht] 
	\centering 
    \subfloat
    {  
		\includegraphics[trim = 0mm 0mm 0mm 0mm, clip, width=0.79\textwidth, height=0.45\textwidth] 
        {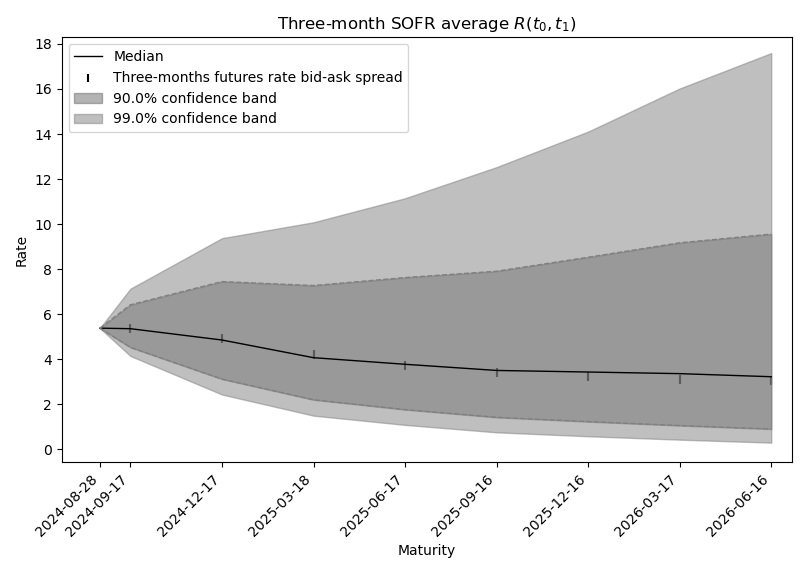}
		\label{fig:allstrikesa}
	}\\
    \subfloat
	{   
		\includegraphics[trim = 0mm 0mm 0mm 0mm, clip, width=0.79\textwidth, height=0.45\textwidth] 
		{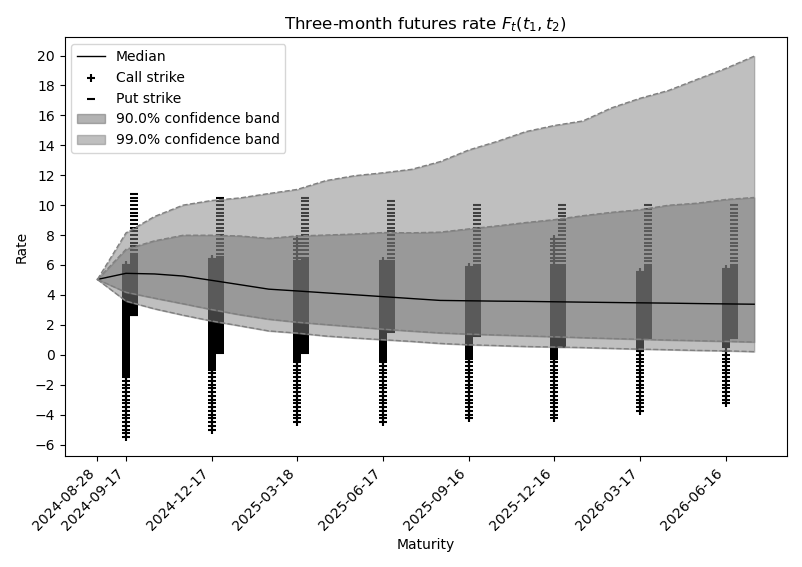}
		   \label{fig:allstrikesb}
	}
	\centering
	\caption
	{The quarterly development of the medians and 90\% and 99\% confidence bands of the simulated values of the underlyings of the SOFR derivatives in Section~\ref{sec:exchange}. The upper plot compares simulated three-month SOFR averages against the three-month futures quotes (the vertical bars). The lower plot compares the simulated three-month futures rates against the strikes of the listed options on three-month futures. The quotes and the strikes were obtained on 28 August 2024 at 15:30:00. Rates are expressed in percentages.
	}
	\label{fig:allstrikes}
\end{figure}

\section{The Optimization Problem}\label{sec:PFO}

Given a set of quotes for the three-month SOFR futures and options together with a stochastic term structure model, we aim to find a buy-and-hold portfolio that yields an optimal wealth distribution at the end of a given investment horizon. In addition to the derivative payouts, the agent may be faced random liability payments associated with their existing financial position. Incorporating such liability payments in the portfolio optimization problem allows us to define indifference prices in terms of the optimum value of the problem; see Section~\ref{sec:iph} below. Throughout, all uncertainties are modelled by the stochastic model of Section~\ref{sec:rf}. We will occasionally denote the corresponding probability space abstractly by $(\Omega,\F,P)$.

We denote by $J$ the collection of quoted derivatives. In the numerical computations below, we will only consider the best price levels, so that the cost of buying $x^j$ units of derivative $j\in J$ is
\begin{equation*}\label{eq:TransactionCosts}
S_0^j (x^j) = 
\begin{cases} 
s^j_a x^j & \text{if } x^j\ge 0,\\
s^j_b x^j & \mbox{if } x^j\le 0,
\end{cases}
\end{equation*}
where $s^j_a\ge s^j_b$ are the ask and bid price, respectively, of derivative $j$. As usual, negative $x^j$ is interpreted as shorting $-x^j$ units of the contract. The total cost of a portfolio $x\in\reals^J$ is given by
\[
S_0(x) = \sum_{j\in J} S_0^j(x^j).
\]
Possible proportional transaction costs can be incorporated simply by reducing the bid prices and increasing the ask prices.

We will denote by $q^j_a$ and $q^j_b$ the quantities available at the best ask and bid prices, respectively, for derivative $j$. Since we only consider trading at the best available prices, the position $x^j$ taken in derivative $j\in J$ is constrained to the interval $[-q^b_j, q^a_j]$. Equivalently, the chosen portfolio $x\in\reals^J$ has to belong to the Cartesian product
\begin{equation}\label{eq:D}
D:= \prod_{j \in J} [-q^j_b, q^j_a].
\end{equation}

Each quoted derivative has a finite number of payout dates. Options and futures have a single payout date while swap contracts pay periodically. Since the set $J$ of quoted derivatives is finite, the total number $I$ of payout dates is finite. We will denote the payout dates by $T_i$, $i=1,\ldots,I$. The payout at time $T_i$ from holding $x^j$ units of derivative $j$ will be denoted by $G^j_i(x^j)$. The functions $G^j_i$ are random. In case of a European option with maturity $T_k$, for example, we simply have
\[
G^j_i(x^j) = C^jx^j
\]
if $i=k$ and $G^j_i(x^j)=0$ otherwise. Here $C^j_i$ denotes the random payout of one unit of the option at maturity given by~\eqref{eq:callpo} or~\eqref{eq:putpo}, depending on whether the option is call or a put. If $j\in J$ is a futures contract, we have $s^j_a = s^j_b = 0$ but
$G^j_i(x^j)$ is a random nonlinear function of $x^j$ given by~\eqref{eq:3mfutpo}. The total payout of a portfolio $x\in\reals^J$ at time $T_i$ is given by 
\[
G_i (x) = \sum_{j\in J}G^j_i(x^j).
\]
Importantly, $G_i$ are {\em concave} functions of the portfolio $x$. In the computations below, we will only consider the best bid and ask prices, but the concavity would hold even if further price levels in the limit order book were taken into account; see e.g.~\cite{pen11}.

Consider an agent whose current financial position amounts to an obligation to pay $c_i$ units of cash at time $T_i$ for $i=0,\ldots,I$. We allow $c_i$ to be random but only to depend on the evolution of the underlying risk factors up to time $T_i$; see Section~\ref{sec:rf}. In other words, the sequence $c=(c_i)_{i=0}^I$ is adapted to the filtration $(\F_i)_{i=0}^I$ generated by the underlying stochastic process driving the risk factors. Here and in what follows, $\F_i$ denotes the $\sigma$-algebra generated by the risk factors observed by time $T_i$. We allow $c_i$ to take positive as well as negative values so they can describe both income and liability payments. In particular, $-c_0\in\reals$ can be interpreted as the initial cash endowment. 

We assume that all payments are rolled over to the terminal date $T_I$ using the overnight repo market. We assume, for simplicity, that the overnight market is perfectly liquid, so that the amount $z_i$ of cash invested at each time $T_i$ evolves according to the equation
\begin{equation}\label{eq:CashProcess90}
z_i = (1+R_i) z_{i-1} + G_i(x) - c_i \quad i = 1,\ldots, I.
\end{equation} 
Here, $x\in\reals^J$ is the portfolio of quoted derivative assets chosen at time $T_0$ and $R_i$ is the cumulative interest from the overnight market obtained over period $(T_{i-1},T_i]$. While investing in the overnight repo market yields the daily SOFR $r_t$, the money market return over the holding period $(T_{i-1},T_i]$ is given by
\[
R_i := R(T_{i-1}, T_{i}) \delta (T_i - T_{i-1})
\]
where $R(T_{i-1}, T_{i})$ is the geometric SOFR average in~\eqref{eq:SOFR_Average}. Since the future values of SOFR overnight rate $r_t$ are random, both $R(T_i, T_{i-1})$ and $R_i$ are random. We will assume that, $R_i\ge -1$ almost surely, i.e.\ that the average SOFR rate is always greater than $-100\%$. This implies that we can treat money as a ``freely disposable" asset. This property will be used in the next section to establish some crucial convexity properties of the trading problem. 

Equation~\eqref{eq:CashProcess90} implies that the terminal wealth is given by
\begin{equation}\label{eq:mmprocessroll}
z_I = z_0 \beta_0 + \sum_{i=1}^I[G_i(x) - c_i]\beta_i,
\end{equation}
where 
\begin{equation}\label{eq:rolling}
\beta_i := \prod_{j=i+1}^I (1+R_j)
\end{equation}
is the return accrued from rolling over cash in the money market from time $T_i$ to the terminal date $T_I$. The product over an empty set is defined as 1, so $\beta_I=1$. 

Since $G_i$, $R_i$ and $c_i$ are random, the terminal wealth $z_I$ is also random. We will denote the linear space of real-valued random variables by $L^0:=L^0(\Omega,\F,P)$. We will study the problem of optimizing the derivative portfolio $x\in\reals^J$ so that the random terminal net wealth distribution is as nice as possible as measured by an extended real-valued function $\V:L^0\to\ereals:=\reals\cup\{+\infty,-\infty\}$. The optimization problem can then be written as
\begin{equation}\tag{$OP$}\label{OP}
	\begin{aligned}
		&\minimize & &\V\left(z_0\beta_0 + \sum_{i=1}^I[G_i(x) - c_i] \beta_i\right)\quad\ovr\ x\in D,\ z_0\in\reals\\
		&\st & & z_0 + S_0(x) + c_0 \le 0 \\
	\end{aligned}
\end{equation}
where $D$ is the set given by~\eqref{eq:D} and $z_0$ denotes the amount of money market investments at the initial time $T_0$.

\begin{remark}\label{rem:margins}
CME SOFR futures and options are subject to margin requirements. One could incorporate margin requirements as additional constraints in problem~\eqref{OP}. A particularly interesting option would be to use the CME {\em portfolio margining service}. According to~\cite{CMEmargin}, the portfolio margining has shown capital efficiencies of up to 100\% for certain portfolios. For simplicity, and for the lack of a detailed description of the portfolio margining, we leave the study of the margin requirements for future work. For appropriate choices of the risk measure $\V$, however, one could expect the optimal solutions of~\eqref{OP} to be close to market-neutral and thus, to result in low capital requirements according to the portfolio margining.
\end{remark}

We will assume throughout that the function $\V:L^0\to\ereals$ is {\em convex} and {\em nonincreasing} in the partial order of $L^0$. The monotonicity simply means that the agent always prefers more money to less. The convexity corresponds to risk-aversion. The convexity and monotonicity imply that~\eqref{OP} is a {\em convex} optimization problem. Indeed, as already pointed out, the payout functions $G_i$ are concave in the portfolio~$x$. Being a composition of a concave $L^0$-valued function and a nonincreasing convex function $\V$ on $L^0$, the objective is convex; see e.g.~\cite{pen99}. The set $D$ is the Cartesian product of closed intervals and thus, convex. The cost function $S_0(x)$ in the budget constraint is convex; see~\cite{pen11}. It follows that the budget constraint specifies a convex region in the space of portfolios.

The budget constraint in~\eqref{OP} allows for throwing away cash. If we were to write it as an equality, the resulting problem would not be convex. Any optimal solution will, however, satisfy the constraint as an equality provided the function $\V$ is strictly decreasing, and interest rates stay strictly above $-100\%$. Indeed, the objective is then strictly decreasing in $z_0$. In the numerical experiments below, we will use the {\em entropic} risk measure $\V:L^0\to\ereals$ given by
\begin{equation}\label{eq:RiskMeasure}
	\V(u) \coloneqq \frac{1}{\rho} \ln E  \exp \left( -\frac{\rho u}{B_I} \right),
\end{equation}
where $\rho>0$ is a risk aversion parameter and $B_I$ a user-specified numeraire; see e.g.~\cite[Chapter~4]{fs16}. The entropic risk measure is convex and strictly decreasing on $L^0$.

The convexity of problem~\eqref{OP} greatly facilitates its numerical solution. Indeed, nonconvex optimization problems are, in general, NP-hard, while convex ones are of polynomial complexity; see e.g.~\cite{nes18}. We will solve problem~\eqref{OP} numerically by first approximating the expectation in~\eqref{eq:RiskMeasure} by a multivariate integration quadrature and then solving the resulting optimization problem by a convex optimization solver; see Section~\ref{sec:ipsd} below. The convexity is important also in derivative pricing in incomplete markets. In particular, it allows relating indifference prices to risk-neutral valuation and replication arguments; see Proposition~\ref{prop:ip} below. 

Problem~\eqref{OP} is inherently subjective. The user's risk preferences are captured by the risk measure $\V$, their financial position is described by the sequence $c$ of liability payments and their views are represented by the probability measure $P$ governing the future development of the underlying risk factors. Market conditions, in turn, are captured by the available quotes which are encoded in the functions $S_0$ and $G_i$. The subjective factors will also influence the indifference prices that build on problem~\eqref{OP}; see Section~\ref{sec:iph} below. This is quite natural since, in incomplete markets, traders are always exposed to some amount of risk when trading nonreplicable derivatives. Subjectivity is the driving force behind all trading in practice.

\section{The Optimum Value Function}\label{sec:optimum}

When the agent is involved in a trade, their financial position described by the sequence $c$ of cashflows in problem~\eqref{OP} is modified. We will denote the space of adapted sequences of cashflows by
\begin{equation*}
\M := \{(c_i)_{i=0}^I \midd c_i\in L^0(\Omega, \F_i, P)\ i=0,\ldots,I\}.
\end{equation*}
Here $L^0(\Omega,\F_i,P)$ denotes the linear space of equivalence classes of real-valued $\F_i$-measurable random variables. Given $c\in\M$, we will denote the optimum value of problem~\eqref{OP} by
\begin{equation}\label{eq:optimumterminal}
\varphi(c) := \inf_{x\in D, z_0\in\reals}\Biggl\{\V\left(z_0\beta_0 + \sum_{i=1}^I[G_i(x) -c_i] \beta_i\right)\ \bigg| \ z_0 + S_0(x) + c_0\le 0 \Biggr\}.
\end{equation}
The optimum value measures the quality of the agent's financial position as described by $c\in\M$. While the function $\V:L^0\to\ereals$ measures the ``risk" involved in delivering a random payment at time $T_I$, the function $\varphi:\M\to\ereals$ can be thought of as a market-consistent risk measure on sequences of payments. It takes into account the agent's views and risk preferences as well as the fact that, for given $c\in\M$, the agent can optimize his position in the derivatives market. The {\em optimum value function} $\varphi:\M\to\ereals$ will be instrumental in the pricing applications below.

To analyze the properties of $\varphi$, we start by writing problem~\eqref{OP} in terms of the {\em cash process} $z\in\M$ as 
\begin{equation}\label{OP2}
\begin{aligned}
&\minimize & &\V(z_I)\quad\ovr\ x\in D,\ z\in\M\\
&\st & & z_0 + S_0(x) + c_0 \le 0, \\
& & & z_i \le (1+R_i) z_{i-1} + G_i(x) - c_i\quad i=1,\ldots,I\quad P\text{-}a.s.
\end{aligned}
\end{equation}
In this problem, we have relaxed the equality~\eqref{eq:CashProcess90} to an inequality. Since the function $\V$ is decreasing with respect to the pointwise order of $L^0$, it is clear that this does not affect the optimal portfolio $x\in\reals^J$ nor the optimum value. Indeed, at optimal solutions, the inequality constraints will hold as equalities since we have assumed that interest rates remain strictly above $-100\%$; a rational agent never throws away money. The optimum value $\varphi(c)$ of~\eqref{OP} thus equals the optimum value of~\eqref{OP2}.

Introducing a dummy variable $d\in\M$, we can express the optimum value function in the form
\begin{align*}
\varphi(c) &= \inf_{x\in D, z\in\M}\{\V(z_I)\mid z_0 + S_0(x) + c_0 \le 0, \\
&\qquad\qquad\qquad\qquad  z_i \le (1+R_i) z_{i-1} + G_i(x) - c_i\ i=1,\ldots,I\ P\text{-}a.s.\}\\
&= \inf_{x\in D, z,d\in\M}\{\V(d_I-c_I)\mid z_I=0,\ c_i\le d_i,\ z_0 + S_0(x) + d_0 \le 0, \\
&\qquad\qquad\qquad\qquad\qquad z_i \le (1+R_i) z_{i-1} + G_i(x) - d_i\ i=1,\ldots,I\ P\text{-}a.s.\}
\end{align*}
or simply as 
\begin{equation}\label{varphi}
\varphi(c)=\inf_{d\in\C}\bar\V(d-c)
\end{equation}
where the function $\bar\V:\M\to\ereals$ is given by
\[
\bar\V(c):=\begin{cases}
\V(c_I) & \text{if $c_i\ge 0$ for $i=1,\ldots,I-1$ $P$-a.s.},\\
+\infty & \text{otherwise},
\end{cases}
\]
and
\begin{multline}\label{eq:sh}
\C:=\{c\in\M\mid \exists (x,z)\in D\times\M:\ z_0 + S_0(x) + c_0 \le 0,\\ 
z_i \le (1+R_i) z_{i-1} + G_i(x) - c_i\ i=1,\ldots,I,\ z_I=0\quad P\text{-}a.s.\}.
\end{multline}
The set $\C$ consists of claims $c\in\M$ that can be {\em superhedged} without a cost by trading the quoted hedging instruments and the overnight money market.

The {\em recession cone}
\begin{equation}\label{eq:rec}
\C^\infty := \{c\in\M\,|\,\bar c+\alpha c\in\C\quad \forall \bar c\in\C,\ \forall\alpha>0\}
\end{equation}
of $\C$ consists of the claims that can be superhedged without cost in unlimited amounts. The notion of recession cone (aka ``horizon cone" and ``asymptotic cone") is classic in convex analysis and optimization; see e.g.~\cite{roc70a,rw98}. Its importance in the study of nonlinear market models was pointed out in~\cite{pen14}; see also~\cite{pen11}. In traditional linear models or in models with proportional transaction costs without constraints, the set $\C$ is a convex cone so that $\C^\infty=\C$.

The following result shows how the convexity and monotonicity properties of the function $\V$ translate to convexity and monotonicity properties of the function $\varphi$. These properties will allow us, in particular, to relate indifference prices to risk-neutral valuations based on replication arguments; see Proposition~\ref{prop:ip} below.

\begin{proposition}\label{lem:varphi}
The set $\C\subset\M$ is convex and the function $\varphi:\M\to\ereals$ is convex with
\[
\varphi(\bar c+c)\le\varphi(\bar c)\quad\forall \bar c\in\M,\ c\in\C^\infty.
\]
\end{proposition}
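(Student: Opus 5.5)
We prove the three claims in turn: convexity of $\C$, convexity of $\varphi$ through the representation~\eqref{varphi}, and the monotonicity inequality from the definition~\eqref{eq:rec} of $\C^\infty$.

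\emph{Convexity of $\C$.} Let $c^1,c^2\in\C$ with witnesses $(x^1,z^1),(x^2,z^2)\in D\times\M$, and let $\lambda\in[0,1]$. Set $x=\lambda x^1+(1-\lambda)x^2$ and $z=\lambda z^1+(1-\lambda)z^2$. Then $x\in D$ because $D$ is convex, and $z\in\M$ with $z_I=0$. Convexity of $S_0$ gives
\[
z_0+S_0(x)+c_0\le\lambda\bigl(z^1_0+S_0(x^1)+c^1_0\bigr)+(1-\lambda)\bigl(z^2_0+S_0(x^2)+c^2_0\bigr)\le 0 .
\]
Concavity of each $G_i$ gives, almost surely, $G_i(x)\ge\lambda G_i(x^1)+(1-\lambda)G_i(x^2)$. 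Taking the convex combination of the two dynamic inequalities therefore yields
\[
z_i\le(1+R_i)z_{i-1}+G_i(x)-c_i,\qquad i=1,\ldots,I.
\]
The randomness of $R_i$ causes no difficulty here, since the inequalities are linear in $z$. Hence $\lambda c^1+(1-\lambda)c^2\in\C$.

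\emph{Convexity of $\varphi$.} First, $\bar\V$ is convex on $\M$. Its domain condition $c_i\ge0$ for $i=1,\ldots,I-1$ defines a convex cone, and $c\mapsto\V(c_I)$ is convex because $\V$ is. Next, the function $(c,d)\mapsto\bar\V(d-c)+\delta_\C(d)$ is jointly convex on $\M\times\M$, where $\delta_\C$ denotes the indicator of $\C$. Its infimal projection over $d$ is exactly~\eqref{varphi}, and infimal projections of jointly convex functions are convex (e.g.\ Rockafellar–Wets, Proposition 2.22).

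One point needs care: $\V$ and $\bar\V$ take values in $\ereals$, so convexity must be read in the extended sense. We check it directly via strict epigraphs. Suppose $\varphi(c^k)<\alpha_k$ for $k=1,2$. Choose $d^k\in\C$ with $\bar\V(d^k-c^k)<\alpha_k$. Then $\lambda d^1+(1-\lambda)d^2\in\C$, and convexity of $\bar\V$ gives
\[
\varphi\bigl(\lambda c^1+(1-\lambda)c^2\bigr)\le\bar\V\bigl(\lambda(d^1-c^1)+(1-\lambda)(d^2-c^2)\bigr)<\lambda\alpha_1+(1-\lambda)\alpha_2 .
\]
This is the extended-real definition of convexity, so the $\pm\infty$ cases need no separate treatment.

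\emph{Monotonicity along $\C^\infty$.} Let $c\in\C^\infty$ and $\bar c\in\M$. For any $d\in\C$, take $\alpha=1$ in~\eqref{eq:rec} to get $d+c\in\C$. Since
\[
\bar\V\bigl((d+c)-(\bar c+c)\bigr)=\bar\V(d-\bar c),
\]
we obtain
\[
\varphi(\bar c+c)\le\inf_{d\in\C}\bar\V(d-\bar c)=\varphi(\bar c).
\]
This step only uses that $\C+\C^\infty\subset\C$, which is immediate from the definition. Neither monotonicity of $\V$ nor the equivalence between~\eqref{OP} and~\eqref{OP2} is needed, since $\varphi$ is taken here in the form~\eqref{varphi}.
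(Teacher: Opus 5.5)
Your proposal is correct and follows the paper's proof essentially step by step: convex combinations of superhedging strategies give convexity of $\C$, convexity of $(c,d)\mapsto\bar\V(d-c)+\delta_\C(d)$ plus infimal projection gives convexity of $\varphi$ (your strict-epigraph check only makes the extended-real case explicit), and inserting $\C+\C^\infty\subset\C$ into \eqref{varphi} gives the last inequality. One caveat concerns your closing remark: monotonicity of $\V$ is dispensable only once \eqref{varphi} is granted, because the passage from \eqref{eq:optimumterminal} to \eqref{varphi} (relaxing the cash dynamics to inequalities and introducing $d$) is exactly where the nonincreasing property of $\V$ is used, and without it the inequality can fail for $\varphi$ as defined in \eqref{eq:optimumterminal}.
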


\begin{proof}
To prove that $\C$ is convex, let $c^1,c^2\in\C$ and $\alpha^1,\alpha^2>0$ with $\alpha^1+\alpha^2=1$. By the definition of $\C$, there exists $(x^1,z^1)\in D\times\M$ that superhedges $c^1$ in the sense that
\begin{gather*}
z^1_0 + S_0(x^1) + c^1_0 \le 0,\\ 
z^1_i \le (1+R_i) z^1_{i-1} + G_i(x^1) - c^1_i\ i=1,\ldots,I,\ z^1_I=0\quad P\text{-}a.s.
\end{gather*}
Similarly, there exists $(x^2,z^2)\in D\times\M$ that superhedges $c^2$. The convexity of $S_0$ and the scenariowise concavity of $p_i$ now imply that the strategy $(\bar x,\bar z):=(\alpha^1x^1+\alpha^2x^2,\alpha^1z^1+\alpha^2z^2)$ superhedges $\bar c:=\alpha^1c^1+\alpha^2c^2$. Thus, $\bar c\in\C$, so $\C$ is convex.

Since the function $\V$ is convex on $L^0$, the function $\bar\V$ is convex on $\M$. This and the convexity of the set $\C$ imply that the function
\[
f(c,d):=\begin{cases}
\bar\V(d-c) & \text{if $d\in\C$},\\
+\infty & \text{otherwise}
\end{cases}
\]
is convex. Since $\varphi(c)=\inf_{d\in\M}f(c,d)$, by~\eqref{varphi}, the convexity of $\varphi$ thus follows from the general fact that an infimal projection of a convex function is convex; see e.g.~\cite[Theorem~1]{roc74}.

As to the last claim, let
$d\in\C$ and $c\in\C^\infty$. By the definition of the recession cone, we have $d+c\in\C$, and thus,
\[
\inf_{d\in\C}\bar\V(d-\bar c) \ge \inf_{d+c\in\C}\bar\V(d-\bar c) = \inf_{d'\in\C}\bar\V(d'-\bar c-c)
\]
which, by~\eqref{varphi},  means that $\varphi(\bar c+c)\le\varphi(\bar c)$.
\end{proof}

\section{Indifference Pricing}\label{sec:iph}

In an incomplete market, prices at which agents are willing to trade are best described by the classical indifference pricing principle; see e.g.~\cite{hodges1989optimal,car9,pen14} and the references there. Indifference pricing assumes that the agent optimizes their investment strategy before and after they buy or sell a financial product. The prices are defined as break-even prices at which the agent is indifferent about entering the trade. The resulting prices are naturally calibrated to the price quotes of the hedging instruments as well as the agent’s risk preferences, views and financial position; see~\cite{pen14} for further discussion and references. 


We will study indifference pricing of general swap contracts where two agents exchange a sequence $c\in\M$ (protection leg) for a constant multiple $\alpha\in\reals$ of another sequence $p\in\M$ (premium leg). The constant $\alpha$ will be called the {\em swap rate}. Traditional options where the premium is paid upfront, correspond to $p=(1,0,\ldots,0)$. In many contracts, the premium leg is a constant sequence, but in others, it is random. Examples of the latter include {\em credit default swaps}, where the premiums are paid only until the random default time. The premium sequence is random also in many kinds of {\em contracts for difference}.

The lowest swap rate at which a rational agent with an existing financial position $\bar c\in\M$ would be willing to take a short position in the swap is given by
\begin{equation}\label{eq:isp}
\pi_s(\bar c,p;c) := \inf\{\alpha\in \reals \midd \varphi(\bar c+c-\alpha p) \le \varphi(\bar c)\},
\end{equation}
where $\varphi:\M\to\ereals$ is the optimum value function of~\eqref{OP} introduced in Section~\ref{sec:optimum}. If the agent enters the short position with swap rate $\alpha$, their financial position would become $\bar c+c-\alpha p$. The inequality in~\eqref{eq:isp} means that, after reoptimizing their portfolio in the quoted derivatives, the quality of their net position, as measured by the optimum value of~\eqref{OP}, would be at least as good as it was before the trade. The value $\pi_s(\bar c,p;c)$ is called the {\em indifference swap rate} for taking a short position in the swap; see~\cite{pen14}. Similarly, the {\em indifference swap rate} for taking a long position in the swap is given by
\begin{equation}\label{eq:ibp}
\pi_b(\bar c,p;c) := \sup\{ \alpha \in \reals \midd \varphi(\bar c-c+\alpha p) \le \varphi(\bar c)\}.
\end{equation}

The indifference swap rates can be bounded by the {\em super-} and {\em subhedging swap rates} defined by
\begin{align*}
\pi_{\sup}(p;c) &= \inf\{\alpha\in\reals\,|\,c-\alpha p\in\C^\infty\}\\
\intertext{and}
\pi_{\inf}(p;c) &= \sup\{\alpha\in\reals\,|\,\alpha p-c\in\C^\infty\},
\end{align*}
where $\C^\infty$ is the {\em recession cone} of the set $\C$ of claims that can be superhedged without a cost; see~\eqref{eq:sh} and~\eqref{eq:rec}. Again, if $\C$ is a cone, as happens in traditional perfectly liquid market models and models with proportional transaction costs, the set $\C$ is a convex cone and $\C^\infty=\C$.

The following is essentially from~\cite[Section~4]{pen14}, but we repeat the simple argument for the convenience of the reader.

\begin{proposition}\label{prop:ip}
For any $\bar c,p\in\M$, the function $\pi_s(\bar  c,p; c)$ is convex as a function of $c\in\M$,
\[
\pi_b(\bar c, p; c) = -\pi_s(\bar  c,p; -c)\quad\forall c\in\M
\]
and, as soon as $\pi_s(\bar c,p;0)=0$,
\[
\pi_{\inf}(p;c)\le\pi_b(\bar c, p; c)\le \pi_s(\bar c, p; c)\le\pi_{\sup}(p;c)\quad\forall c\in\M.
\]
The above hold as equalities if, in addition, there exists an $\bar\alpha\in\reals$ such that both $c-\bar\alpha p$ and $\bar\alpha p-c$ belong to the set $\C^\infty$. In this case, the common value equals~$\bar\alpha$.
\end{proposition}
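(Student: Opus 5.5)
The plan is to obtain all four claims directly from the definitions~\eqref{eq:isp} and~\eqref{eq:ibp}, using only two earlier facts: the convexity of $\varphi$ and the monotonicity $\varphi(\bar c+c)\le\varphi(\bar c)$ for $c\in\C^\infty$, both from Proposition~\ref{lem:varphi}.

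\emph{Convexity.} Fix $\bar c,p$ and consider the set
\[
A:=\{(c,\alpha)\in\M\times\reals\mid \varphi(\bar c+c-\alpha p)\le\varphi(\bar c)\}.
\]
This is a sublevel set of the convex function $\varphi$ composed with the affine map $(c,\alpha)\mapsto\bar c+c-\alpha p$, so $A$ is convex. We can then write $\pi_s(\bar c,p;c)=\inf_{\alpha}\{\alpha+\delta_A(c,\alpha)\}$, where $\delta_A$ is the indicator function of $A$. The function $(c,\alpha)\mapsto\alpha+\delta_A(c,\alpha)$ is jointly convex. Hence $\pi_s(\bar c,p;\cdot)$ is an infimal projection of a convex function, and is convex by the same result~\cite[Theorem~1]{roc74} used in the proof of Proposition~\ref{lem:varphi}.

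\emph{Symmetry.} Substituting $\alpha=-\alpha'$ gives
\[
\pi_b(\bar c,p;c)=\sup\{\alpha\mid\varphi(\bar c-c+\alpha p)\le\varphi(\bar c)\}=-\inf\{\alpha'\mid\varphi(\bar c+(-c)-\alpha' p)\le\varphi(\bar c)\}=-\pi_s(\bar c,p;-c).
\]

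\emph{Outer inequalities.} Suppose $c-\alpha p\in\C^\infty$. Proposition~\ref{lem:varphi} applied with this element gives $\varphi(\bar c+c-\alpha p)\le\varphi(\bar c)$. Thus $\alpha$ is feasible in~\eqref{eq:isp}, so $\pi_s(\bar c,p;c)\le\alpha$, and taking the infimum over such $\alpha$ yields $\pi_s\le\pi_{\sup}$. Likewise, $\alpha p-c\in\C^\infty$ makes $\alpha$ feasible in~\eqref{eq:ibp}, giving $\pi_{\inf}\le\pi_b$.

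\emph{Middle inequality.} Here we use convexity together with $\pi_s(\bar c,p;0)=0$:
\[
0=\pi_s(\bar c,p;0)\le\tfrac12\pi_s(\bar c,p;c)+\tfrac12\pi_s(\bar c,p;-c)=\tfrac12\pi_s(\bar c,p;c)-\tfrac12\pi_b(\bar c,p;c),
\]
which gives $\pi_b\le\pi_s$.

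The main technical point is handling extended real values, which requires the usual inf-addition convention $\infty-\infty=+\infty$ for convex functions. If $\pi_s(\bar c,p;c)=+\infty$, the inequality $\pi_b\le\pi_s$ is trivial. If either term on the right equals $-\infty$ while the other is finite or $-\infty$, convexity would force $\pi_s(\bar c,p;0)=-\infty$, contradicting the normalization. So the remaining cases are finite and the displayed computation is valid.

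\emph{Equality case.} If both $c-\bar\alpha p$ and $\bar\alpha p-c$ lie in $\C^\infty$, the definitions give $\pi_{\sup}(p;c)\le\bar\alpha\le\pi_{\inf}(p;c)$. Combined with the chain $\pi_{\inf}\le\pi_b\le\pi_s\le\pi_{\sup}$, all four quantities coincide and equal $\bar\alpha$.
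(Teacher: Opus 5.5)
Your proposal is correct and follows essentially the same route as the paper: convexity of $\pi_s(\bar c,p;\cdot)$ as an infimal projection of $\alpha$ restricted to a convex sublevel set of $\varphi$, the sign-flip identity, the midpoint convexity inequality at $0$, and Proposition~\ref{lem:varphi} for the super- and subhedging bounds (you prove $\pi_{\inf}\le\pi_b$ directly rather than through the symmetry, which is equivalent). Your extended-value discussion goes beyond the paper, but the leftover case is not ``finite'' as you claim: when $\pi_s(\bar c,p;-c)=+\infty$ (i.e.\ $\pi_b=-\infty$) the inequality $\pi_b\le\pi_s$ is simply trivial.
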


\begin{proof}
Let $\A(\bar c):=\{c\in\M\mid \varphi(\bar c+c) \le \varphi(\bar c) \}$.
We have
\[
\pi_s(\bar c,p;c) = \inf_{\alpha\in\reals} f(\alpha,c),
\]
where $f:\reals\times\M\to\ereals$ is given by
\[
f(\alpha,c) = \begin{cases}
\alpha & \text{if $c-\alpha p\in\A(\bar c)$},\\
+\infty & \text{otherwise}.
\end{cases}
\]
By Proposition~\ref{lem:varphi}, the function $\varphi$ is convex, so $\A(\bar c)$ is a convex set. This implies the convexity of $f$, so $\pi_s(\bar c,p;\cdot)$ is convex; see e.g.~\cite[Theorem~1]{roc74}. The equality in the statement follows directly from the definitions of the indifference swap rates. The convexity of $\pi_s(\bar c,p;\cdot)$ gives
\[
\pi_s(\bar c,p;0) = \pi_s(\bar c,p;\frac{1}{2}c+\frac{1}{2}(-c)) \le \frac{1}{2}\pi_s(\bar c,p;c) + \frac{1}{2}\pi_s(\bar c,p;-c)
\]
so if $\pi_s(\bar c,p;0)=0$, we have
\[
\pi_b(\bar c,p;c)=-\pi_s(\bar c,p;-c) \le \pi_s(\bar c,p;c).
\]
If $c-\alpha p\in\C^\infty$, Proposition~\ref{lem:varphi} gives $\varphi(\bar c+c-\alpha p)\le\varphi(\bar c)$ and thus 
\[
\pi_s(\bar c,p;c) = \inf\{\alpha\,|\,\varphi(\bar c+c-\alpha p)\le\varphi(\bar c)\} \le \inf\{\alpha\,|\, c-\alpha p\in\C^\infty\} = \pi_{\sup}(p;c).
\]
Since $\pi_{\inf}(p;c)=-\pi_{\sup}(p;-c)$ and $\pi_l(\bar c,p;c) = -\pi_s(\bar c,p;-c)$ we also get
\[
\pi_{\inf}(p;c)\le\pi_l(\bar c,p;c),
\]
which completes the proof of the inequalities. The last claim now follows from the fact that, if $c-\bar\alpha p\in\C^\infty$ and $\bar\alpha p-c\in\C^\infty$, then $\pi_{\sup}(\bar c,p;c)\le\bar\alpha$ and $\pi_{\inf}(p;c)\ge\bar\alpha$.
\end{proof}

The last part of Proposition~\ref{prop:ip} implies that the indifference rates are independent of the user's risk preferences and the financial position $\bar c$ provided the swap contract is such that $c-\bar\alpha p\in\C^\infty(-\C^\infty)$ for some $\bar\alpha\in\reals$. Indeed, the set $\C^\infty$ does not depend on such subjective factors. When $\C$ is a cone, we have $\C^\infty=\C$ so the above condition means that the sequences $c-\bar\alpha p$ and $\bar\alpha p-c$ can both be superhedged without a cost. In traditional linear models, this holds, in particular, when the sequences are replicable. The last part of Proposition~\ref{prop:ip} thus implies that, for replicable claims, indifference prices coincide with traditional risk-neutral prices.

Computing the indifference prices requires a one-dimensional search over $\alpha$ in~\eqref{eq:isp} and~\eqref{eq:ibp}. This can be done numerically provided that the optimum value $\varphi(c)$ of problem~\eqref{OP} can be evaluated for a given sequence of cash flows $c\in\M$. Since an exact evaluation of $\varphi$ is rarely possible, it is natural to redefine $\varphi$ as the lowest attainable value. The resulting indifference prices then depend not only on an agent's views, risk preferences and the current financial position, but also on their expertise in optimising their trading strategies. 

The next result says that, when $\V$ is the entropic risk measure given by~\eqref{eq:RiskMeasure}, the computation of the two indifference prices reduces to the solution of just three instances of problem~\eqref{OP}, provided the random variable
\begin{equation}\label{eq:B}
Z:=\sum_{i=0}^Ip_i\beta_i
\end{equation}
is almost surely strictly positive and one chooses $B_I=Z$ in~\eqref{eq:RiskMeasure}. Here $\beta_i$ is the cumulative interest earned in the repo market over $[T_i,T_I]$; see~\eqref{eq:rolling}.

\begin{proposition}\label{prop:swapopt}
If the risk measure $\V$ is given by~\eqref{eq:RiskMeasure} where $B_I=Z$ is strictly positive, then the indifference swap rates are given by
\begin{align*}
\pi_s(\bar c , p; c)  = \varphi(\bar c + c) - \varphi(\bar c),\\
    \pi_b(\bar c , p; c)  = \varphi(\bar c) - \varphi(\bar c-c).
\end{align*}
\end{proposition}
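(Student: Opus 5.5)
The plan is to establish a cash-invariance (translation) property of $\varphi$ along the direction $p$:
\[
\varphi(c-\alpha p)=\varphi(c)-\alpha\qquad\forall c\in\M,\ \alpha\in\reals.
\]
With this identity, both formulas follow at once from the definitions \eqref{eq:isp} and \eqref{eq:ibp}.

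First, the entropic risk measure with $B_I=Z$ satisfies $\V(u+\alpha Z)=\V(u)-\alpha$ for all $u\in L^0$ and $\alpha\in\reals$. Indeed,
\[
\exp(-\rho(u+\alpha Z)/Z)=e^{-\rho\alpha}\exp(-\rho u/Z),
\]
so taking $\frac1\rho\ln E$ gives the claim. This is where the strict positivity of $Z$ is needed, since it makes $u/Z$ well defined.

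Next, insert $c-\alpha p$ into \eqref{eq:optimumterminal}. The terminal wealth becomes
\[
z_0\beta_0+\sum_{i=1}^I[G_i(x)-c_i]\beta_i+\alpha\sum_{i=1}^I p_i\beta_i,
\]
and the budget constraint becomes $z_0+S_0(x)+c_0-\alpha p_0\le 0$. Change variables to $z_0'=z_0-\alpha p_0$; since $\beta_0$ multiplies $z_0$, the terminal wealth is
\[
z_0'\beta_0+\sum_{i=1}^I[G_i(x)-c_i]\beta_i+\alpha Z.
\]
The constraint on $(x,z_0')$ is now exactly that of problem \eqref{OP} for $c$. The objective equals the original objective minus $\alpha$, by the first step. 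Taking the infimum gives $\varphi(c-\alpha p)=\varphi(c)-\alpha$, including the cases where the values are $\pm\infty$.

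The step I expect to be the main obstacle is the change of variable, which must absorb $p_0$ correctly. Here $p_0$ must be deterministic, or at least $\F_0$-measurable, which I take to be trivial. This holds because $p\in\M$, and the paper already treats $c_0\in\reals$.

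Finally, the swap rates follow. We have
\[
\varphi(\bar c+c-\alpha p)=\varphi(\bar c+c)-\alpha\le\varphi(\bar c)
\iff
\alpha\ge\varphi(\bar c+c)-\varphi(\bar c),
\]
so the infimum in \eqref{eq:isp} equals $\varphi(\bar c+c)-\varphi(\bar c)$. Similarly,
\[
\varphi(\bar c-c+\alpha p)=\varphi(\bar c-c)+\alpha\le\varphi(\bar c)
\]
gives the supremum $\varphi(\bar c)-\varphi(\bar c-c)$. Alternatively, the second formula follows from the first together with $\pi_b(\bar c,p;c)=-\pi_s(\bar c,p;-c)$ from Proposition~\ref{prop:ip}. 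These steps assume $\varphi(\bar c)$ is finite, which is implicit in the statement.
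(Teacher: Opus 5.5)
Your proposal is correct and follows the paper's proof. You use the same change of variables $\tilde z_0=z_0-\alpha p_0$ and the same translation property $\V(u+\alpha Z)=\V(u)-\alpha$ of the entropic measure with $B_I=Z$ to get $\varphi(\bar c+c-\alpha p)=\varphi(\bar c+c)-\alpha$, and then read off both swap rates; your remarks that $p_0$ must be deterministic and $\varphi(\bar c)$ finite make explicit assumptions the paper leaves implicit.
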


\begin{proof}
By the definition of the optimum value function $\varphi$,
\begin{multline*}
    \varphi(\bar c + c - \alpha p) 
    = \inf_{x\in D, z_0\in \reals}\Biggl\{ \V\left(z_0\beta_0 + \sum_{i=1}^IG_i(x) \beta_i - \sum_{i=1}^I(\bar c_i + c_i - \alpha p_i)\beta_i\right) \\
\bigg|\ z_0 + S_0(x) + \bar c_0 + c_0 - \alpha p_0\le 0\  P\text{-}a.s. \Biggr\}.
\end{multline*}
The change of variables, $\tilde z_0:=z_0-\alpha p_0$, gives
\begin{multline}\label{eq:Z}
    \varphi(\bar c+c-\alpha p) 
    = \inf_{x\in D, \tilde z_0\in \reals}\Biggl\{\V\left((\tilde z_0+\alpha p_0)\beta_0 + \sum_{i=1}^IG_i(x) \beta_i - \sum_{i=1}^I(\bar c_i + c_i - \alpha p_i)\beta_i\right) \\
\bigg|\ \tilde z_0 + S_0(x) + \bar c_0 + c_0 \le 0\  P\text{-}a.s. \Biggr\}.
\end{multline}
The definition of $\V$ in~\eqref{eq:RiskMeasure} gives
\begin{align*}
\V\Bigg((\tilde z_0+\alpha p_0)\beta_0 &+ \sum_{i=1}^IG_i(x) \beta_i - \sum_{i=1}^I(\bar c_i + c_i - \alpha p_i)\beta_i\Bigg)\\
&=\V\left(\tilde z_0\beta_0 + \sum_{i=1}^IG_i(x) \beta_i - \sum_{i=1}^I(\bar c_i + c_i)\beta_i + \alpha\sum_{i=0}^Ip_i\beta_i\right)\\
&=\frac{1}{\rho}\ln E\exp\left[-\frac{\rho}{B_I}\left(\tilde z_0\beta_0 + \sum_{i=1}^IG_i(x) \beta_i - \sum_{i=1}^I(\bar c_i + c_i)\beta_i\right) - \rho\alpha\right]\\
&=\V\left(\tilde z_0\beta_0 + \sum_{i=1}^IG_i(x) \beta_i - \sum_{i=1}^I( \bar c_i+c_i)\beta_i\right) - \alpha.
\end{align*}
Substituting this into~\eqref{eq:Z} gives
\[
\varphi(\bar c + c - \alpha p) = \varphi(\bar c+c) - \alpha.
\]
Using this in the definition of the indifference selling price in~\eqref{eq:isp} gives the expression for $\pi_s(\bar c,p;c)$ in the statement. The expression for $\pi_b(\bar c,p;c)$ follows by a similar argument.
\end{proof}
 
Under the assumptions of  Proposition~\ref{prop:swapopt}, we have, in particular, $\pi_s(\bar c,p;0) = 0$, which by Proposition~\ref{prop:ip}, implies $\pi_b(\bar c,p;c)\le\pi_s(\bar c,p;c)$ for all $c\in\M$.

\section{Pricing and Hedging SOFR Derivatives}\label{sec:ipsd}

This section computes the indifference swap rates and the corresponding hedging strategies for SOFR swaps, swaptions and caplets using the pricing formulas in Proposition~\ref{prop:swapopt}. Such derivatives cannot be replicated, so traditional risk-neutral pricing would be difficult to justify for trading purposes.

To evaluate the optimum value function and to find the optimal hedging portfolios, problem~\eqref{OP} is solved by first approximating the expectation by an integration quadrature and then solving the resulting problem using the interior point solver of MOSEK Optimizer~\cite{aps2019mosek}. All computations are performed using an Intel$^{\text{\sffamily\textregistered}}$ Core$^{\text{\sffamily{\tiny TM}}}$ i7 at 3.00 GHz~x~8 Dell Latitude laptop running Linux with 16 GB of RAM and 128 GB of virtual memory. The problem is implemented in Python~\cite{van1995python} using NumPy~\cite{harris2020array}, Pandas~\cite{mckinney2015pandas} and Numba~\cite{lam2015numba} to construct the integration quadratures and the payout matrices. The problem is then communicated to the interior point solver using MOSEK Optimizer API~\cite{aps2019mosek}. In all the examples below, the indifference prices are computed in less than a minute.

\subsection{Overnight Index Swaps}\label{sec:ois}

As mentioned at the beginning of Section~\ref{sec:otc}, payments of a SOFR swap can be replicated by taking appropriate positions in zero-coupon bonds and rolling the payments in the overnight market. This becomes impractical when ZCBs are illiquid. Bickersteth et al.~\cite{BDR2025} construct dynamic replicating strategies using futures contracts. However, their market model is complete, and it does not allow for jumps in the overnight rate. We will compute indifference prices and optimal hedges for overnight index swaps using the model from the previous sections.

We start by studying two-month SOFR Overnight Index Swaps (OIS) having a notional of \$500,000 on the 28 August 2024. Since the maturity is less than a year, all cashflows are paid at maturity so that the premium process $p$ is given by $p_i = 0$ for $i < I$ and $p_I = 1$, whereas the sequence $c$ is given by $c_i = 0$ for $i < I$ and $c_I$ is swap floating leg; see Section~\ref{sec:exchange}. We choose $B_I=Z=p_I\beta_I=1$ in \eqref{eq:B}, so the indifference swap rates are given by the formulas in Proposition~\ref{prop:swapopt}. The number of the available hedging instruments for the two-month investment horizon is 47. They consist of a June three-month SOFR futures contract and call and put options on the three-month September futures. While the June futures expires on the third Wednesday of September 2024, the options expire on the preceding Friday. We assume that the user's initial financial position consists of \$1 million in cash. This means that the process $\bar c$ in~\eqref{eq:isp} and~\eqref{eq:ibp} is given by $\bar c_0 = -10^6$ and $\bar c_i = 0$ for $i>0$. We start by choosing the risk aversion parameter $\rho=100$. 

The sell rate is 5.23544\% while the buy rate is 5.23177\%. Figure~\ref{fig:swaphs} illustrates the hedging portfolios for a short (left-hand plot) and long (right-hand plot) positions in the two-month SOFR OIS. The hedging portfolios are defined as the difference between the optimal portfolios before and after entering the swap position. The strategies are sparse, consisting of only about 7 out of the 47 available derivatives. 

Figure~\ref{fig:swaphspo} compares the payouts of the hedging strategies with the payouts of the OIS being priced. The blue dots give the payouts of the hedging portfolio against the value of the floating leg in different scenarios. The orange curves give the payouts of the swaps. Exact replication of the OIS is not possible with the considered 47 hedging instruments, but the payout of the constructed hedging strategy approximately tracks the OIS payout. The hedging strategy is optimized to the given quotes as well as to the agent's views and risk preferences.

\begin{figure}[!ht] 
	\centering
	{   
		\includegraphics[trim = 0mm 0mm 0mm 0mm, clip, width=0.45\textwidth] 
        {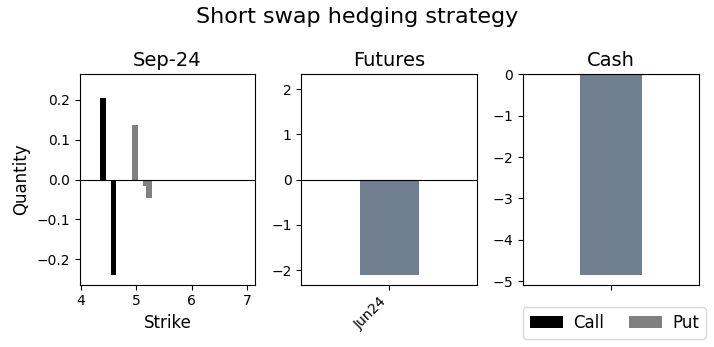}
	}
    {   
		\includegraphics[trim = 0mm 0mm 0mm 0mm, clip, width=0.45\textwidth] 
        {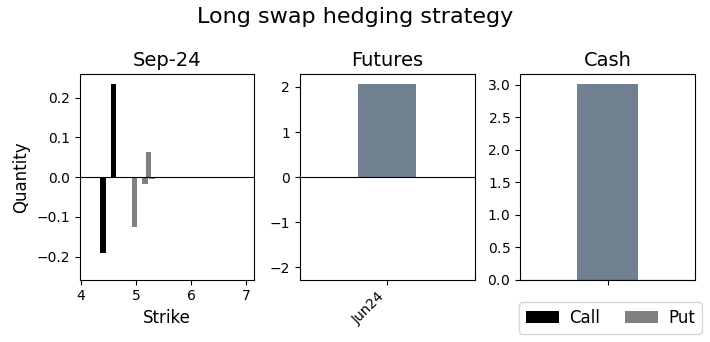}
	}
	\centering
	\caption
	{Hedging strategies for a two-month SOFR OIS for a short position (left) and a long position (right) obtained with a risk aversion parameter $\rho=100$ and a notional of \$500,000 on 28 August 2024. The OIS indifference sell rate is 5.23544\% while the buy rate is 5.23177\%. Cash is expressed in USD. 
	}
	\label{fig:swaphs}
\end{figure}

\begin{figure}[!ht] 
	\centering
	{  
		\includegraphics[trim = 0mm 0mm 0mm 0mm, clip, width=0.45\textwidth] 
        {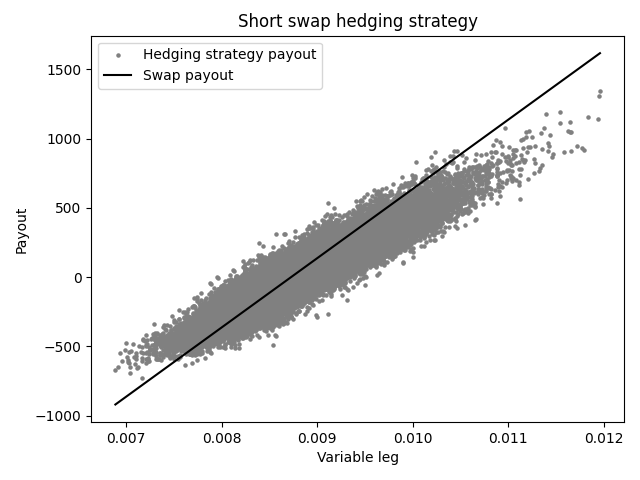}
	}
    {  
		\includegraphics[trim = 0mm 0mm 0mm 0mm, clip, width=0.45\textwidth] 
        {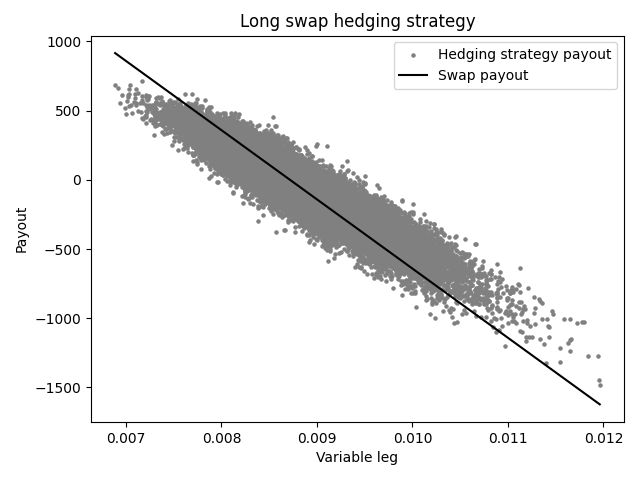}
	}
	\centering
	\caption 
	{The payouts of the hedging strategies (in Figure~\ref{fig:swaphs}) compared with the payout of a two-month SOFR OIS as a function of the OIS variable leg $R(T_0,T)\delta(T-T_0)$ for a short position (left) and a long position (right) on 28 August 2024. The OIS notional is \$500,000 while the risk aversion parameter is $\rho=100$. The OIS indifference sell rate is 5.23544\% while the buy rate is 5.23177\%. The payout is expressed in USD, while the variable leg is expressed in decimals.
	}
	\label{fig:swaphspo}
\end{figure}  

We will next study the sensitivity of the indifference swap rates with respect to the risk aversion parameter and transaction costs. We compute the indifference rates for varying values of the risk aversion parameter $\rho$ while keeping all other problem parameters and specifications unchanged. The left-hand plot in Figure~\ref{fig:rhosensswap} shows the indifference buy and sell rates for the two-month OIS as function of the risk aversion parameter. As $\rho$ increases, the rates decrease and the spread between the buy and sell rates first widens before starting to narrow at approximately $4.83$. The observed swap sell and buy rates on a Bloomberg terminal on 28 August 2024 were 5.14789\% and 5.14217\%, respectively. These quotes correspond to a risk aversion of roughly 3000.

\begin{figure}[!ht] 
	\centering
    \subfloat
	{   
		\includegraphics[trim = 0mm 0mm 0mm 0mm, clip, width=0.49\textwidth] 
        {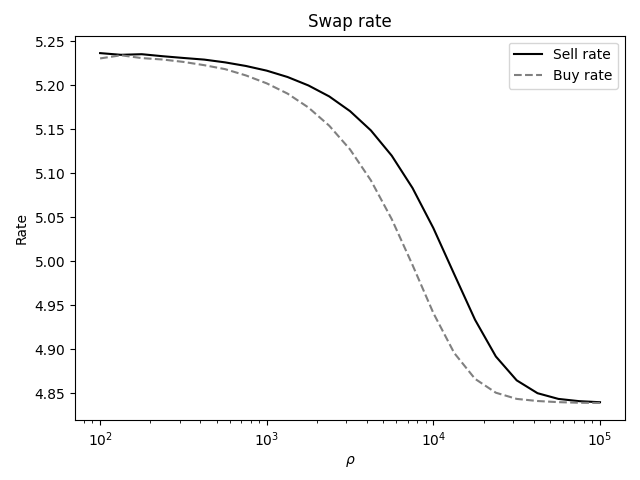}
	}
    \subfloat
	{   
		\includegraphics[trim = 0mm 0mm 0mm 0mm, clip, width=0.49\textwidth] 
    {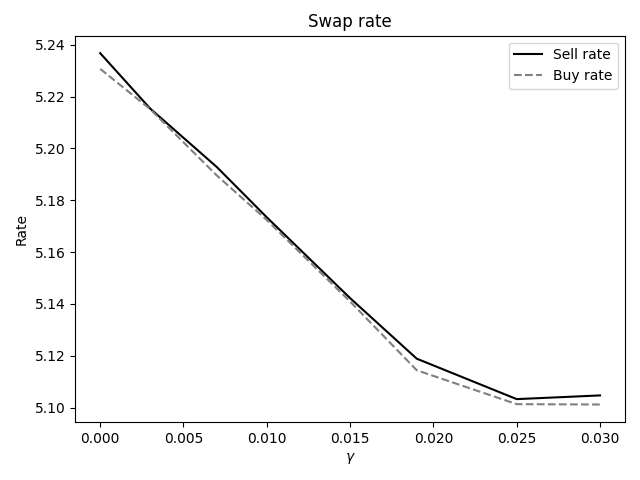}
	}
	\centering
	\caption
	{Two-month indifference swap rates as a function of the risk aversion parameter $\rho$ (left-hand plot) and transaction costs $\gamma$ (right-hand plot) on 28 August 2024. The risk aversion level is $\rho=100$ for the right-hand plot. Rates and transaction costs are expressed as percentages.
	}
	\label{fig:rhosensswap}
\end{figure}

We next study the effect of market liquidity on the indifference prices. More precisely, we compute the indifference rates after adding proportional transaction costs with an identical rate $\gamma$ across all derivatives quotes. The dependence of the indifference rates on $\gamma$ is illustrated in the right-hand plot of Figure~\ref{fig:rhosensswap}. Both rates seem to decrease linearly up to around $\gamma=0.025\%$ after which the prices remain nearly constant. For transaction costs above $\gamma=0.025\%$, the hedging portfolio consists solely of the money market account.

\subsection{Swaptions}\label{sec:swaption}

Yueh and Wu~\cite{yueh24} discuss the challenges of pricing and hedging SOFR swaptions. Section~4.5 of~\cite{BDR2025} gives a replication strategy using SOFR futures, but again, their model is a complete market model that doesn't allow for jumps in the rates.

We will study the indifference prices on 28 August 2024 of a call swaption having a strike of 3\%, a notional of \$500,000, and expiring in four months. The underlying OIS expires one year after the call maturity. Entering a SOFR swaption requires an upfront premium payment at the time $T_0$ of purchase. The premium process $p$ is given by $p_0 = 1$ and $p_i=0$ for $i > 0$, and the sequence of cash flows $c$ is given by $c_i = 0$ for $i < I$ while $c_I$ is the payout of the swaption given by~\eqref{eq:swaptionpo4} with $T_I$ equal to the swaption expiry date. We choose $B_I=Z= p_0\beta_0$ in~\eqref{eq:B}, so the swaption prices are given by Proposition~\ref{prop:swapopt}. For the four-month investment horizon, 116 hedging instruments are available. They consist of three-month futures and options, expiring in December and September 2024; see Section~\ref{sec:exchange}. 
As in Section~\ref{sec:ois}, we assume that the user's initial financial position consists of \$1 million in cash, i.e.~the process $\bar c$ in~\eqref{eq:isp} and~\eqref{eq:ibp} is given by $\bar c_0 = -10^6$ and $\bar c_i = 0$ for $i>0$. We start by choosing the risk aversion parameter $\rho=100$. 

The indifference selling price is \$243.11 while the buying price is \$232.62. The corresponding hedging strategies are illustrated in Figure~\ref{fig:swaptionhs}. Again, the hedging portfolios are sparse, consisting of only about 10 instruments out of the 116 available ones. Figure~\ref{fig:swaptionhspo} compares the payouts of the hedging strategies (blue dots) against the swaption payouts (orange curves). Exact replication of the swaption payout is not possible with the considered 116 hedging instruments. The constructed hedges are optimal given the quotes of the hedging instruments and the specified views and risk preferences.

\begin{figure}[!ht] 
	\centering
    \subfloat
	{   
		\includegraphics[trim = 0mm 0mm 0mm 0mm, clip, width=0.49\textwidth] 
        {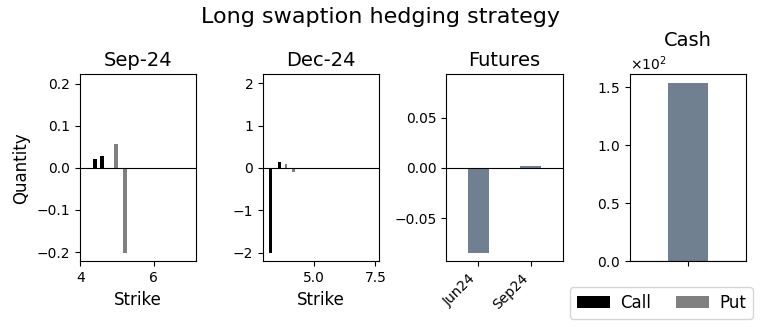}
	}
    \subfloat
    {  
		\includegraphics[trim = 0mm 0mm 0mm 0mm, clip, width=0.49\textwidth] 
        {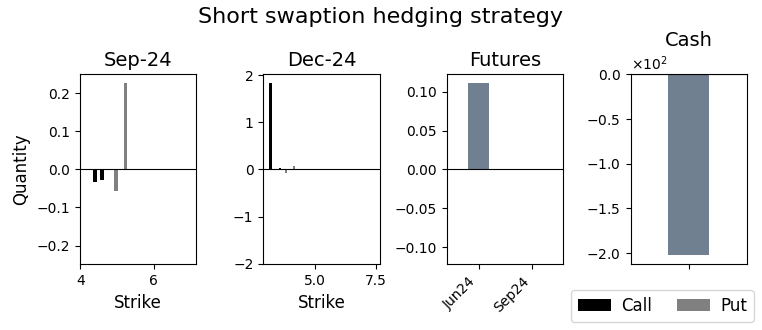}
	}
	\centering
	\caption
	{  Hedging strategies on 28 August 2024 of long (left) and short (right) positions in a call swaption with strike 3\% expiring in four months and an underlying OIS expiring in one year. The notional is \$500,000 while the risk aversion parameter is $\rho=100$. The indifference selling price is \$243.11 while the buying price is \$232.62. Cash is expressed in USD.
	}
	\label{fig:swaptionhs}
\end{figure}

\begin{figure}[!ht] 
	\centering
    \subfloat
	{   
		\includegraphics[trim = 0mm 0mm 0mm 0mm, clip, width=0.49\textwidth] 
        {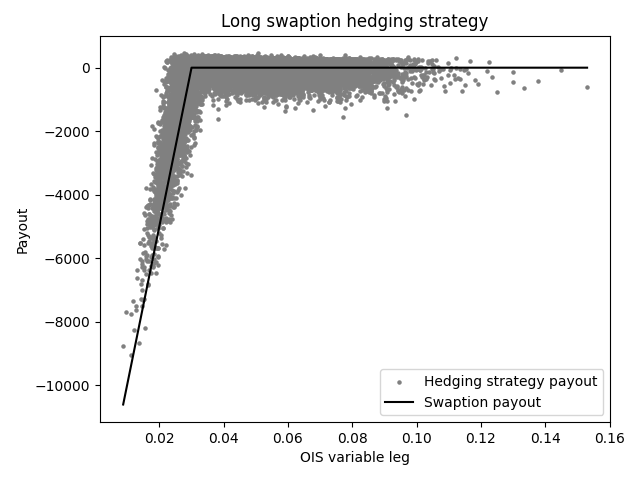}
	}
    \subfloat
    {   
		\includegraphics[trim = 0mm 0mm 0mm 0mm, clip, width=0.49\textwidth] 
        {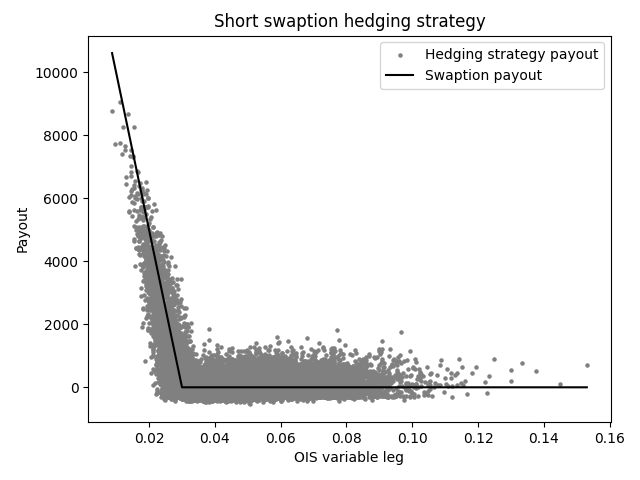}
	}
	\centering
	\caption
	{Payouts of the hedging strategies (in Figure~\ref{fig:swaptionhs}) compared with the call swaption payout as a function of the underlying OIS variable leg for a long position (left) and short position (right). The notional is \$500,000 while the risk aversion parameter is $\rho=100$. The indifference selling price is \$243.11 while the buying price is \$232.62. The payout is expressed in USD, whereas the underlying OIS variable leg is expressed in decimals.
	}
	\label{fig:swaptionhspo}
\end{figure}

We will next study the dependence of indifference prices on the risk-aversion parameter by computing the prices for varying levels of risk aversion parameter $\rho$. The resulting prices are illustrated in the left-hand plot of Figure~\ref{fig:rhosensswaption}. Unlike in the case of the OIS in Section~\ref{sec:ois}, the spread between the indifference buying and selling prices increases monotonically with $\rho$.

\begin{figure}[!ht]
	\centering
    \subfloat
	{   
		\includegraphics[trim = 0mm 0mm 0mm 0mm, clip, width=0.49\textwidth] 
        {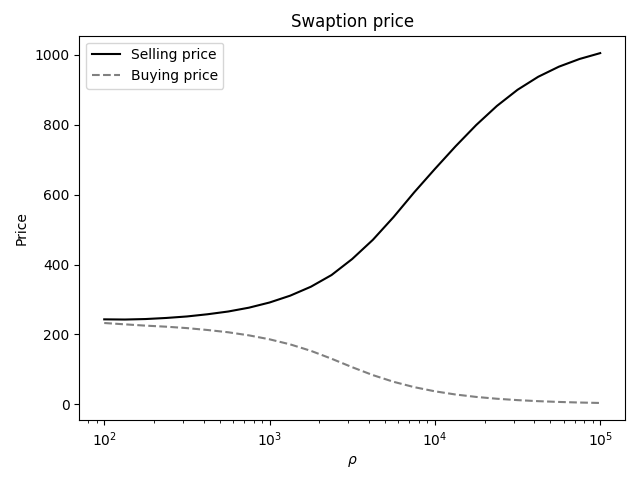}
	}
    \subfloat
	{ 
		\includegraphics[trim = 0mm 0mm 0mm 0mm, clip, width=0.49\textwidth] 
        {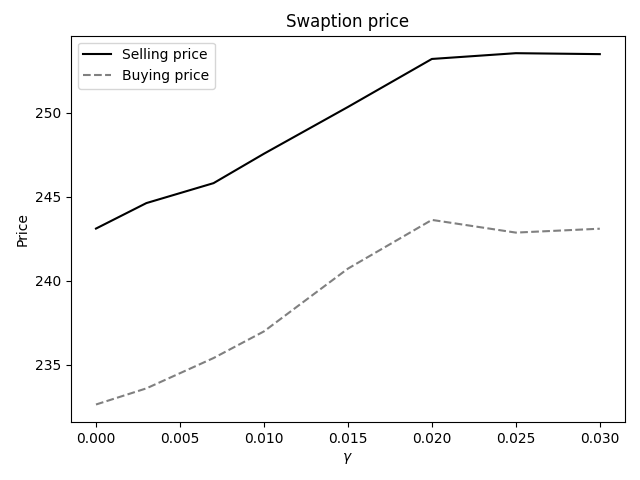}
	}
	\centering
	\caption
	{Price sensitivity with respect to the risk aversion parameter $\rho$ (left-hand plot) and transaction costs $\gamma$ (right-hand plot) on 28 August 2024 of a four-month swaption with strike 3\% and an underlying OIS expiring in one year. The risk aversion level is $\rho=100$ for the right-hand plot. Price is expressed in USD, while transaction costs $\gamma$ are expressed as percentages.
	}
	\label{fig:rhosensswaption}
\end{figure}

Right-hand plot in Figure~\ref{fig:rhosensswaption} illustrates the dependence of the indifference prices on the liquidity of the hedging instruments. As in Section~\ref{sec:ois}, we increase the bid-ask spreads of the hedging instruments by adding proportional transaction costs $\gamma$ to each instrument. 
The right-hand plot in Figure~\ref{fig:rhosensswaption} plots the indifference prices as a function of $\gamma$. Both prices appear to increase up to approximately $\gamma=0.02\%$, after which the prices remain nearly constant. For transaction costs above $\gamma=0.02\%$, the hedging portfolio consists solely of the money market account.

\subsection{Caps and Floors}\label{sec:caps}

Risk-neutral pricing together with hedging of caps and floors have been studied e.g.\ in~\cite{BDR2025,fontana2024term}. While the former assumed a complete market model and provided replication strategies using futures contracts, the latter employs a term structure model with stochastic discontinuities. Such a model gives rise to incompleteness, which makes replication impossible, in general. To address this,~\cite{fontana2024term} constructs approximate hedging strategies using local risk-minimization aka.\ quadratic hedging; see~\cite{MS2001}.

We study the indifference prices on 28 August 2024 of a six-month caplet whose reference period starts in one month, has a strike of 3\% and a notional of \$500,000. The caplet premium is paid upfront. The premium process $p$ is given by $p_0 = 1$ and $p_i=0$ for $i>0$, and the sequence of cashflows $c$ is given by $c_i=0$ for $i<I$ while $c_I$ is the caplet payout given by~\eqref{eq:capletpo} with $T_I$ equal to the caplet expiry date. We choose $B_I=Z=p_0\beta_0$ in~\eqref{eq:B}, so the caplet prices are given by Proposition~\ref{prop:swapopt}. For the seven-month investment horizon, 207 hedging instruments are available. They consist of three-month futures and options, expiring in March 2025 and December and September 2024; see Section~\ref{sec:exchange}. 
As in Section~\ref{sec:ois}, we assume that the user's initial financial position consists of \$1 million in cash, i.e.~the process $\bar c$ in~\eqref{eq:isp} and~\eqref{eq:ibp} is given by $\bar c_0 = -10^6$ and $\bar c_i = 0$ for $i>0$. We start by choosing the risk aversion parameter $\rho=100$. 

The indifference selling price is \$96.35 while the buying price is \$94.78. The corresponding hedging strategies are illustrated in Figure~\ref{fig:capleths}. Again, the hedging portfolios are sparse, consisting of about 20 instruments out of the 207 available ones. Figure~\ref{fig:caplethspo} compares the payouts of the hedging strategies (blue dots) against the caplet payouts (orange curves). Exact replication of the caplet payout is not possible with the considered 207 hedging instruments. The constructed hedges are optimal given the quotes of the hedging instruments and the specified views and risk preferences.

\begin{figure}[!ht] 
	\centering
    \subfloat
	{   
		\includegraphics[trim = 0mm 0mm 0mm 0mm, clip, width=0.49\textwidth] 
        {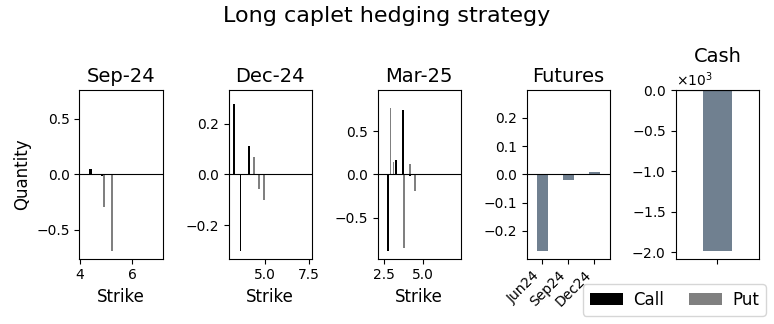}
	}
    \subfloat
    {  
		\includegraphics[trim = 0mm 0mm 0mm 0mm, clip, width=0.49\textwidth] 
        {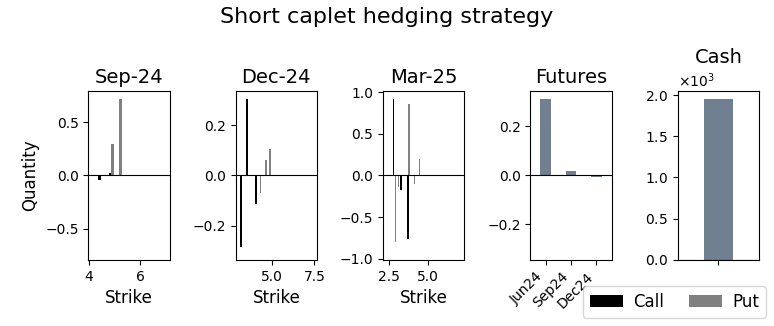}
	}
	\centering
	\caption
	{Hedging strategies on 28 August 2024 of long (left) and short (right) positions in a six-month SOFR caplet starting in one month with strike 3\% expiring in six months. The notional is \$500,000 while the risk aversion parameter is $\rho=100$. The indifference selling price is \$96.35 while the buying price is \$94.78. Cash is expressed in USD.
	}
	\label{fig:capleths}
\end{figure}

\begin{figure}[!ht] 
	\centering
    \subfloat
	{   
		\includegraphics[trim = 0mm 0mm 0mm 0mm, clip, width=0.49\textwidth] 
        {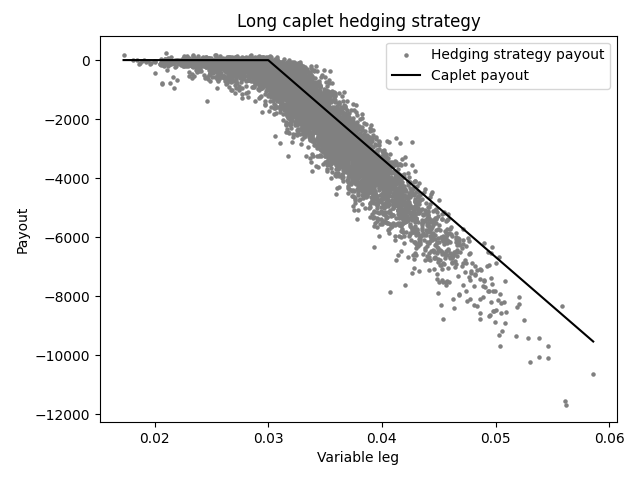}
	}
    \subfloat
    {   
		\includegraphics[trim = 0mm 0mm 0mm 0mm, clip, width=0.49\textwidth] 
        {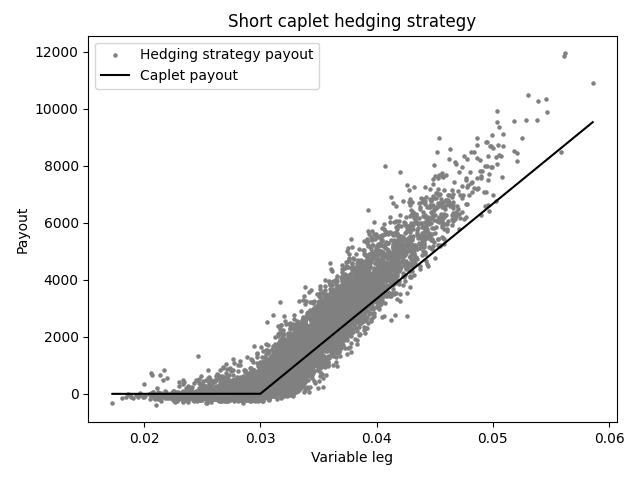}
	}
	\centering
	\caption
	{Payouts of the hedging strategies (in Figure~\ref{fig:capleths}) compared with the caplet payout as a function of the caplet variable leg for a long position (left) and short position (right). The notional is \$500,000 while the risk aversion parameter is $\rho=100$. The indifference selling price is \$96.35 while the buying price is \$94.78. The payout is expressed in USD, whereas the caplet variable leg is expressed in decimals.
	}
	\label{fig:caplethspo}
\end{figure}

We will next study the dependence of indifference prices on the risk-aversion parameter by computing the prices for varying levels of risk aversion parameter~$\rho$. The resulting prices are illustrated in the left-hand plot of Figure~\ref{fig:senscaplet}. Like in the case of the swaption in Section~\ref{sec:swaption}, the spread between the indifference buying and selling prices increases monotonically with $\rho$.

Right-hand plot in Figure~\ref{fig:senscaplet} illustrates the dependence of the indifference prices on the liquidity of the hedging instruments. As in the previous sections, we increase the bid-ask spreads of the hedging instruments by adding proportional transaction costs $\gamma$ to each instrument. 
The right-hand plot in Figure~\ref{fig:senscaplet} plots the indifference prices as a function of $\gamma$. Both prices appear to increase up to approximately $\gamma=0.35\%$, after which the prices remain nearly constant. For transaction costs above $\gamma=0.35\%$, the hedging portfolio consists solely of the money market account.

\begin{figure}[!ht]
	\centering
    \subfloat
	{   
		\includegraphics[trim = 0mm 0mm 0mm 0mm, clip, width=0.49\textwidth] 
        {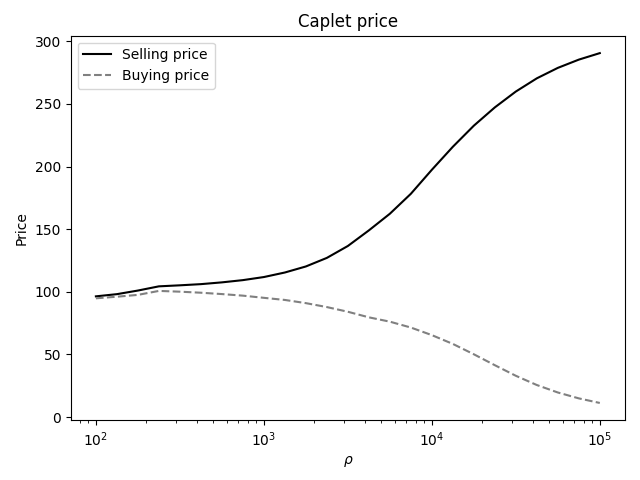}
	}
    \subfloat
	{ 
		\includegraphics[trim = 0mm 0mm 0mm 0mm, clip, width=0.49\textwidth] 
        {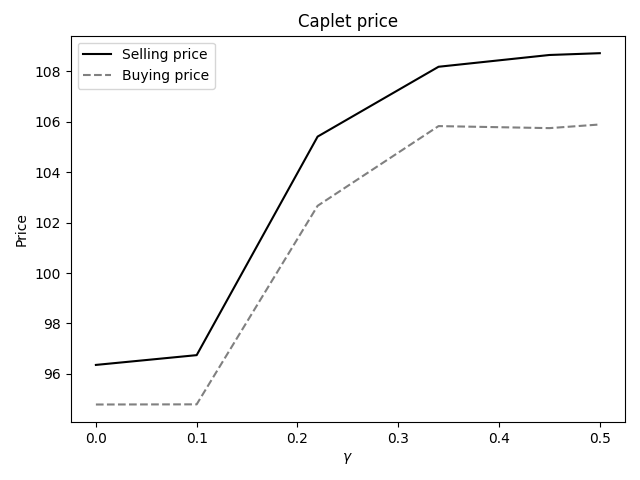}
	}
	\centering
	\caption
	{Price sensitivity with respect to the risk aversion parameter $\rho$ (left-hand plot) and transaction costs $\gamma$ (right-hand plot) on 28 August 2024 of a six-month SOFR caplet starting in one month with strike 3\%. The risk aversion level is $\rho=100$ for the right-hand plot. Price is expressed in USD, while transaction costs $\gamma$ are expressed as percentages.
	}
	\label{fig:senscaplet}
\end{figure}

\bibliographystyle{plain}
\bibliography{sp}

\end{document}